\documentclass[12pt]{article}
\usepackage{amsmath}
\usepackage{setspace, lscape}
\RequirePackage{amsthm,amsmath,amsfonts,amssymb,mathtools}
\RequirePackage[authoryear]{natbib}
\RequirePackage[colorlinks,citecolor=blue,urlcolor=blue]{hyperref}
\RequirePackage{graphicx}
\newcommand{\blind}{1}
\usepackage{appendix}
\usepackage{booktabs}
\usepackage{enumitem}

\usepackage{bm}
\usepackage{float}
\usepackage{hhline}
\usepackage{rotating} 
\usepackage{multirow}
\usepackage{adjustbox}
\usepackage{algorithm}
\usepackage{algpseudocode}
\usepackage[normalem]{ulem}
\usepackage{xr}
\usepackage{subfig}

\newtheorem{theorem}{Theorem}
\newtheorem{lemma}{Lemma}
\newtheorem{proposition}{Proposition}
\newtheorem{assumption}{Assumption}
\newtheorem{corollary}{Corollary}
\newtheorem{remark}{Remark}
\newtheorem{definition}{Definition}

\DeclareMathOperator*{\argmin}{arg\,min}
\DeclareMathOperator{\KL}{KL}

\usepackage{tikz}
\usepackage{forest} 

\newcommand{\bx}{\bm{x}}

\usepackage{tabularx}
\usepackage[table]{xcolor}
\usepackage{colortbl}

\begin{document}

\def\spacingset#1{\renewcommand{\baselinestretch}%
{#1}\small\normalsize} \spacingset{1}


\if1\blind
{
  \title{\bf Structure Learning for Directed Trees with Zero-Inflated Compositional Nodes}
  
  \author{
        Shuangjie Zhang
        \thanks{Address for Correspondence:  105 E 24th St, Austin, TX, 78712. E-mail: shuangjie.zhang@austin.utexas.edu. }\\
        {\small Department of Statistics and Data Sciences, The University of Texas at Austin}
        \and
        Bani K. Mallick \\
        {\small Department of Statistics,  Texas A \& M University}
        \and
        Yang Ni \\
        {\small Department of Statistics and Data Sciences, The University of Texas at Austin}
}

  \maketitle
} \fi

\if0\blind
{
  \bigskip
  \bigskip
  \bigskip
  \begin{center}
    {\LARGE\bf Title}
\end{center}
  \medskip
} \fi

\bigskip
\begin{abstract}
Compositional data, which are vectors of proportions constrained to the probability simplex, arise frequently in modern scientific applications, including microbiome relative abundances across body sites and cell-type mixture weights derived from single-cell genomics. While regression methods for compositional data are well developed, no existing graphical model framework addresses the problem of learning conditional dependence structures among multiple compositional vectors. This paper introduces a novel framework for directed tree structure learning over compositional nodes. We employ the Kullback–Leibler divergence as the scoring function and model the conditional expectation of each child composition as a mixture of a baseline composition and a parent-driven component parameterized by a column-stochastic transition matrix. This formulation respects the simplex geometry, handles zero-inflated compositions gracefully, and, combined with a non-degeneracy condition on the transition matrix, ensures identifiability of edge directions from observational data. We prove consistency of structure recovery and derive finite-sample guarantees that characterize the required sample size in terms of the signal gap, node dimension, and penalty level. The efficacy of our approach is demonstrated through simulations and applications to multi-site microbiome data and single-cell data, yielding interpretable directed structures that align with known biological mechanisms.
\noindent
\end{abstract}

\noindent%
{\it Keywords: Structure learning, Directed tree, Bayesian network, Causal discovery, Kullback Leibler divergence}  \vfill

\newpage
\spacingset{1.75}

\section{Introduction}\label{sec:intro}

Zero-inflated compositional data, where each observation is a non-negative vector that sums to a constant, arise naturally across many scientific domains. A prominent example is microbiome data obtained via high-throughput sequencing \citep{jovel2016characterization}, for which the total sequencing depth varies arbitrarily across samples, rendering absolute counts uninformative (i.e., the biological signal is carried entirely by relative abundances \citep{gloor2017microbiome}), and many microbes have exact zero counts. Similarly, in single-cell genomics, cell-type compositions derived from clustering or classification procedures produce vectors of mixture weights that reside on the probability simplex. More broadly, any data pipeline that aggregates raw measurements into estimated histograms or mixture proportions yields compositional outputs. Because these data are constrained to lie on the simplex $\mathcal{S}^{d-1} = \{\bx \in \mathbb{R}^d : x_r \ge 0,\, \sum_{r} x_r = 1\}$, standard statistical techniques that treat them as unconstrained Euclidean vectors, such as Pearson correlation or ordinary least squares, are known to produce spurious results, most notably artificial negative correlations induced by the sum-to-one constraint \citep{aitchison1982statistical, friedman2012inferring}. To address this, a rich body of regression methodology has been developed for compositional data. In the scalar-on-composition setting, the log-contrast model \citep{aitchison1984log} and its high-dimensional extensions \citep{lin2014variable, shi2016regression,zhang2021bayesian} regress a scalar response on compositional covariates while respecting the simplex geometry. In the reverse direction, Dirichlet regression and related approaches model a compositional response as a function of scalar or categorical predictors \citep{hijazi2009modelling,koslovsky2020bayesian}. More recently, \citet{fiksel2022transformation} proposed a transformation-free linear regression for the composition-on-composition setting, using a Markov transition matrix to directly relate a compositional predictor to a compositional response. While these methods provide powerful tools for regression relationships, they do not address the problem of jointly modeling the conditional dependence structure among multiple compositions.

When multiple compositional vectors are observed simultaneously, for instance, microbial communities profiled across several body sites within the same individual, or cell-type mixture weights measured across immune cell lineages and subjects, understanding the conditional dependencies among these compositions becomes scientifically compelling. Our motivating applications include the Multi-Omic Microbiome Study-Pregnancy Initiative (MOMS-PI) \citep{fettweis2019vaginal}, which profiles microbial communities at five maternal body sites, and a single-cell eQTL dataset OneK1K \citep{yazar2022single}, in which immune cell subtypes yield compositional mixture weights across subjects. In both settings, the fundamental question is not merely whether two compositions are associated, but what directional dependence structure governs their relationships: does the vaginal microbiome drive the cervical microbiome, or vice versa? Do naive immune cells give rise to memory subtypes? Answering such questions requires a framework that can learn the conditional dependence relationships among compositional vectors.

Graphical models provide a natural probabilistic framework for representing conditional (in)dependence among random variables. In an undirected graphical model (also known as a Markov random field), the absence of an edge between two nodes encodes their conditional independence given all remaining variables. Directed acyclic graphical models (DAGs), also known as Bayesian networks, similarly encode conditional independence through the directed Markov property, but additionally allow for causal interpretation under the causal Markov assumption \citep{spirtes2000causation, koller2009probabilistic}. The field of DAG structure learning has seen extensive development, with approaches broadly categorized into constraint-based methods and score-based methods. Constraint-based methods, such as the PC algorithm \citep{spirtes2000causation} and its order-independent variant PC-Stable \citep{colombo2014order}, infer the graph structure by performing a sequence of conditional independence tests. Score-based methods formulate structure learning as an optimization problem solved by e.g., greedy equivalence search \citep{chickering2002optimal}, greedy hill-climbing \citep{bouckaert1994properties}, and the continuous optimization approach \citep{zheng2018dags}; each evaluates candidate graphs using a goodness-of-fit criterion and searches the graph space for an optimum. A fundamental challenge in structure learning is identifiability: from observational data alone, the true DAG can generally only be recovered up to a Markov equivalence class (MEC) \citep{spirtes2000causation}. Distinguishing among graphs within an MEC typically requires additional assumptions, such as non-Gaussianity or nonlinearity \citep{shimizu2006linear, hoyer2008nonlinear, peters2014causal}.

Existing graphical models, both undirected and directed, have been developed primarily for scalar-valued nodes, with classical methods focusing on Gaussian or categorical variables \citep{cooper1992bayesian, koller2009probabilistic}. Recent extensions have broadened the scope to functional data, where each node represents a random function such as a time series or a curve \citep{lee2022functional, yang2022functional, zhou2023functional,roy2023directed}. Graphical models have also been applied to compositional data, but with a fundamentally different objective: existing methods model the conditional dependence among individual microbial taxa within a single microbial community, treating each taxon's abundance as a scalar node \citep{ma2021joint,chung2022phylogenetically}. In contrast, our goal is to learn the dependence structure among multiple compositional vectors, for example, entire microbial communities across body sites, where each node is itself a composition on the simplex. To the best of our knowledge, there is no existing undirected or directed graphical models where each node represents a compositional vector on the probability simplex. This paper is the first attempt in this direction.

Within the class of DAGs, directed trees have been extensively studied and enjoy a distinctive computational advantage. The classical algorithms of \citet{chow1968approximating} and \citet{edmonds1967optimum} guarantee recovery of the globally optimal tree structure by efficiently solving a maximum weight spanning arborescence problem. No analogous polynomial-time algorithm with global optimality guarantees exists for general DAGs, where the search space grows super-exponentially with the number of nodes and the optimization is NP-hard \citep{chickering1996learning}. The computational tractability of directed trees, combined with the interpretability of tree topologies in domains such as microbial phylogenetics and cell lineage, motivates our focus on directed tree models in this paper.

Learning directed trees over compositional nodes introduces three main challenges. First, the simplex geometry must be respected: because compositions are constrained vectors, the loss/score function must operate natively on the simplex rather than treating the data as unconstrained Euclidean vectors. Second, real-world compositional data are often severely zero-inflated. For example, the MOMS-PI microbiome data exhibit zero rates ranging from 39\% to 67\% even after filtering for prevalence. Therefore, the methodology should be able to handle exact zeros. Third, identifiability of edge directions is nontrivial: directed trees sharing the same skeleton belong to the same Markov equivalence class and encode identical conditional independence structures. Hence, distinguishing between alternative orientations requires exploiting specific properties of the loss/score function.

In this paper, we propose a novel framework for learning directed tree structures over zero-inflated compositional nodes. We define a score function based on the Kullback-Leibler (KL) divergence, which provides a natural measure of discrepancy for compositional data while inherently respecting the geometry of the simplex and gracefully handling the zero inflation. For each candidate parent-child pair, the conditional expectation of the child composition is modeled as a mixture of a baseline composition and a parent-driven component parameterized by a column-stochastic transition matrix. This formulation ensures that predictions remain on the simplex. We establish identifiability of the true directed tree under mild non-degeneracy conditions and prove consistency of the structure recovery procedure, supplemented by finite-sample guarantees that quantify the required sample size in terms of the signal gap, node dimension, and penalty level. For parameter estimation, we exploit the connection between KL minimization and multinomial quasi-likelihood to derive an efficient Expectation-Maximization (EM) algorithm, and we adopt the Chu--Liu/Edmonds algorithm, augmented with a virtual root to accommodate directed forests (multiple disconnected directed trees), for the tree search. Cross-validation is used to select the regularization parameter that controls the density of the learned tree/forest. We demonstrate the efficacy of the proposed method through simulation studies and apply it to both multi-domain microbiome data and single-cell data.

The remainder of the paper is organized as follows. Section~\ref{sec:model} introduces the KL-based loss function, establishes the graph identifiability under the proposed loss function, and establishes the estimation theories, including consistency of structure recovery and finite-sample guarantees. Section~\ref{sec:algorithm} describes the EM algorithm for parameter estimation and the overall structure learning procedure with cross-validation. Section~\ref{sec:sim} presents simulation studies, and Section~\ref{sec:realdata} applies the method to the MOMS-PI microbiome dataset and a single-cell eQTL dataset. Section~\ref{sec:con} concludes with a discussion of future directions.

\section{Method}\label{sec:model}
\subsection{Proposed Score Function}\label{sec:psf}
Let $T = (V, E)$ be a directed tree where $V = \{1, \dots, p\}$ is the set of nodes and $E \subseteq V \times V$ is the set of directed edges.  For a directed edge $(j,k)\in E$, we call node $j$ a parent of node $k$, and node $k$ a child of node $j$. A directed tree is a DAG for which each node has at most one parent\footnote{Note that we do not distinguish between a tree (one root) and a forest (multiple roots)}. We denote the parent of node $j$ in $T$ as $\pi_j$. For a directed tree, $\pi_j$ is either a singleton or an empty set. Unlike standard graphical models, where each node $v \in V$ represents a scalar random variable, in this paper, each node represents a random composition vector with $d_j$ parts, denoted by $\mathbf{X}^{(j)} \in \mathcal{S}^{d_j-1}$ where $\mathcal{S}^{d_j-1}= \left\{ \mathbf{X} \in \mathbb{R}^{d_j} \mid X_r \ge 0 \text{ for all } r, \sum_{r=1}^{d_j} X_r = 1 \right\}$ is a $d_j-1$ dimensional simplex.

We define the following population-level loss function for an ordered pair of nodes $(j,k)$ based on Kullback-Leibler (KL) divergence $D_{KL}(\cdot\parallel\cdot)$:
\begin{equation}\label{eq:pnloss}
\mathcal{L}_{jk}\left(\mathbf{X}^{(j)}\mid \mathbf{X}^{(k)}\right) = D_{KL} \left( \mathbf{X}^{(j)} \parallel \widehat{\mathbf{X}}^{(j)} \right) = - \sum_{r=1}^{d_j} X_r^{(j)} \log \left( \frac{\widehat{X}_{r}^{(j)}}{X_{r}^{(j)}} \right),
\end{equation}
which measures the discrepancy between $\mathbf{X}^{(j)}$ and its conditional expectation $\widehat{\mathbf{X}}^{(j)}=\mathbb{E}[\mathbf{X}^{(j)}|\mathbf{X}^{(k)}]$. Unlike the Dirichlet distribution, which has been commonly used for compositional data, this loss \citep{fiksel2022transformation} is particularly appropriate for zero-inflated compositional data as it handles zeros gracefully; since $\lim_{x \to 0^+} x \log x = 0$, the loss does not diverge as $X_r^{(j)}\to 0^+$. 

To respect the compositional nature of $\mathbf{X}^{(j)}$, we specify the conditional expectation as,
\begin{equation} \label{eq:condex}
\widehat{\mathbf{X}}^{(j)} = \omega^{(j)}_{0} \boldsymbol{\eta}_j + \omega^{(j)}_{1} \left( \mathbf{M}_{jk} \mathbf{X}^{(k)} \right),
\end{equation}
subject to the constraints  $\omega^{(j)}_{0}\ge 0$, $\omega^{(j)}_{1} > 0$, and $\omega^{(j)}_{0} + \omega^{(j)}_{1} = 1$.
Note that we restrict $\omega^{(j)}_{1}$ to be strictly positive in anticipating the later use of \eqref{eq:pnloss}-\eqref{eq:condex} in the definition of the score function of a directed tree to avoid ambiguity/contradiction that an edge is present but the corresponding weight is zero. 
In \eqref{eq:condex}, $\boldsymbol{\eta}_j \in \mathcal{S}^{d_j-1}$ is the \textit{baseline} composition for node $j$, i.e., the composition of node $j$ in the absence of the influence from node $k$. The matrix $\mathbf{M}_{jk} \in \mathbb{R}^{d_j \times d_k}$ is a column-stochastic \textit{transition matrix} (i.e.,  each column of $\mathbf{M}_{jk} $ lies in $\mathcal{S}^{d_j-1}$), which maps the composition of node $k$ to that of node $j$. The scalars $\omega^{(j)}_{0}$ and $\omega^{(j)}_{1}$ quantify the mixing proportions, representing the relative contribution of the baseline composition for node $j$ and the composition of node $k$. This formulation ensures that $\widehat{\mathbf{X}}^{(j)}$ always resides on the simplex $\mathcal{S}^{d_j-1}$. 
With a slight abuse of notation, \eqref{eq:pnloss} also denotes the loss of node $j$ conditional on the empty set by taking $\widehat{\mathbf{X}}^{(j)} = \boldsymbol{\eta}_j$ to be the marginal expectation of $\mathbf{X}^{(j)}$.

We now introduce a score-based tree structure learning method by defining the population-level loss function for tree $T$ and compositions $\mathbf{X}=\{\mathbf{X}^{(j)}\}_{j=1}^p$,
\begin{align}\label{eq:ptloss}
    \mathcal{L}(\mathbf{X};T,\Theta) = \sum_{j=1}^p\mathcal{L}_{j\pi_j}(\mathbf{X}^{(j)}\mid\mathbf{X}^{(\pi_j)},\Theta_{j\pi_j}),
\end{align}
where $\Theta=\{\Theta_{j\pi_j}\}_{j=1}^p$ and we make explicit that the loss in \eqref{eq:pnloss} is parameterized by $\Theta_{j\pi_j}=\{\omega_0^{(j)},\omega_1^{(j)},\boldsymbol{\eta}_j,\mathbf{M}_{j\pi_j}\}$ if $\pi_j\neq \emptyset$, or $\Theta_{j\pi_j}=\{\boldsymbol{\eta}_j\}$ if $\pi_j=\emptyset$.  The tree loss \eqref{eq:ptloss} measures the overall goodness-of-fit of $\mathbf{X}$ given a tree $T$ and its associated parameter set $\Theta$. Taking the expectation of \eqref{eq:ptloss} with respect to $\mathbf{X}$ gives the risk function,
\begin{align*}
    \mathcal{R}(\mathbf{X};T,\Theta) &= \mathbb{E}\left[\sum_{j=1}^p\mathcal{L}_{j\pi_j}(\mathbf{X}^{(j)}\mid\mathbf{X}^{(\pi_j)},\Theta_{j\pi_j})\right],\\
    &=\sum_{j=1}^p\mathcal{R}_{j\pi_j}(\mathbf{X}^{(j)}\mid\mathbf{X}^{(\pi_j)},\Theta_{j\pi_j}),
\end{align*}
where $\mathcal{R}_{j\pi_j}(\mathbf{X}^{(j)}\mid\mathbf{X}^{(\pi_j)},\Theta_{j\pi_j})=\mathbb{E}[\mathcal{L}_{j\pi_j}(\mathbf{X}^{(j)}\mid\mathbf{X}^{(\pi_j)},\Theta_{j\pi_j})]$.

Given a sample of $\mathbf{X}$, denoted by $\mathbf{x}_1,\dots,\mathbf{x}_n$, the sample-level risk or the averaged loss is then given by,
\begin{align}\label{eq:stloss}
    \mathcal{R}(\mathbf{x}_1,\dots,\mathbf{x}_n;T,\Theta) = \frac{1}{n}\sum_{i=1}^n\sum_{j=1}^p\mathcal{L}_{j\pi_j}(\mathbf{x}_i^{(j)}\mid\mathbf{x}_i^{(\pi_j)},\Theta_{j\pi_j}).
\end{align}
The proposed score function of a tree $T$ is then the risk evaluated at the optimal $\Theta$ plus a complexity penalty, 
 \begin{equation*}
    S(T;\alpha)= \min_{\Theta} \mathcal{R}(\mathbf{x}_1,\dots,\mathbf{x}_n;T,\Theta) + \alpha \cdot |E|,
\end{equation*}
where $|E|$ is the number of edges in $T$ and $\alpha\geq0$ is a regularization parameter that penalizes complex trees (smaller $\alpha$ leads to a denser tree). Our goal is then to search for $T$ that minimizes $S(T;\alpha)$, that is, $\widehat{T} = \arg\min_{T \in \mathcal{T}}S(T;\alpha)$ where $\mathcal{T}$ is the set of all directed trees on $V = \{1,\ldots,p\}$. This procedure gives rise to two separate optimization problems involving an inner minimization for parameter $\Theta$ given a tree $T$ and an outer minimization for tree search, which will be detailed in Section~\ref{sec:algorithm}.

\subsection{Identifiability} 

Directed trees, more generally DAGs, in the same MEC have exactly the same Markov properties (i.e., conditional independence assertions) and hence are generally not distinguishable from each other. For example, Gaussian and multinomial directed tree models are non-identifiable.
The proposed score, however, goes beyond just conditional independence and allows the tree structure to be uniquely identifiable.

\begin{proposition}[Tree Identifiability]\label{prop:ident_weak}
Under the proposed score function, the directed tree structure $T$ is identifiable (i.e., no two distinct trees lead to the same score), provided that the transition matrices $\mathbf{M}_{jk}$'s are non-degenerate, meaning that not all columns of $\mathbf{M}_{jk}$ are identical.
\end{proposition}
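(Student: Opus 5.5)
We will show that the true tree $T^\ast$ is the unique minimizer of the population score, i.e.\ of $T\mapsto\min_\Theta\mathcal{R}(\mathbf{X};T,\Theta)$ with $\alpha$ small enough. Assume the data are generated from $T^\ast$: for every edge $(k\to j)$ of $T^\ast$, $\mathbb{E}[\mathbf{X}^{(j)}\mid\mathbf{X}^{(k)}]=\omega_0^{(j)}\boldsymbol{\eta}_j+\omega_1^{(j)}\mathbf{M}_{jk}\mathbf{X}^{(k)}$ with $\omega_1^{(j)}>0$ and $\mathbf{M}_{jk}$ non-degenerate, and $\mathbf{X}^{(k)}$ has a non-degenerate distribution on its simplex.

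\textbf{Step 1: decompose the node risk.} Write $\mathcal{R}_{jk}$ as $\mathbb{E}[-\sum_r X^{(j)}_r\log \widehat X^{(j)}_r]$ plus a constant. By the tower property,
\begin{equation*}
\mathcal{R}_{jk}=\mathbb{E}\,D_{KL}\bigl(\mathbb{E}[\mathbf{X}^{(j)}\mid\mathbf{X}^{(k)}]\,\big\|\,\widehat{\mathbf{X}}^{(j)}\bigr)+C_{jk},
\end{equation*}
where $C_{jk}=\mathbb{E}[H(\mathbb{E}[\mathbf{X}^{(j)}\mid\mathbf{X}^{(k)}])]-\mathbb{E}[H(\mathbf{X}^{(j)})]$ does not depend on the parameters. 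Hence $\min_\Theta\mathcal{R}_{jk}\ge C_{jk}$, with equality iff the model class contains the true conditional mean. The same argument with the empty parent set gives the marginal constant $C_{j\emptyset}$. By Jensen's inequality (concavity of entropy), $C_{j\emptyset}-C_{jk}$ equals the expected KL divergence between the conditional and marginal means. This quantity is strictly positive exactly when $\mathbb{E}[\mathbf{X}^{(j)}\mid\mathbf{X}^{(k)}]$ is nonconstant. Non-degeneracy of $\mathbf{M}_{jk}$ together with $\omega_1>0$ ensures this for true edges, so every true edge strictly lowers the risk. A small $\alpha$ then rules out forests that omit true edges.

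\textbf{Step 2: compare a competing tree $T\neq T^\ast$.} Under $T^\ast$ every node attains its lower bound: $j$ with $\pi^\ast_j$ achieves $C_{j\pi^\ast_j}$. For any $T$, the risk satisfies $\sum_j\min\mathcal{R}_{j\pi_j}\ge\sum_j C_{j\pi_j}$. It then suffices to show two things.

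\begin{enumerate}[label=(\roman*)]
\item For any node, the true parent is the most informative one, so $C_{j\pi_j}\ge C_{j\pi^\ast_j}$. This should follow from a data-processing argument: in a directed tree, any other node $k$ reaches $j$ only through $\pi_j^\ast$, so $\mathbb{E}[\mathbf{X}^{(j)}\mid\mathbf{X}^{(k)}]=\mathbb{E}[\,\mathbb{E}[\mathbf{X}^{(j)}\mid\mathbf{X}^{(\pi_j^\ast)}]\mid\mathbf{X}^{(k)}]$ by the Markov property. Jensen's inequality then gives $C_{jk}\ge C_{j\pi^\ast_j}$.
\item Some node of $T$ has strictly positive excess.
\end{enumerate}

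\textbf{Step 3: strictness (main obstacle).} For $T\neq T^\ast$ with the same number of edges, there is a node $j$ whose parent in $T$ differs from its true parent; this covers reversed edges. We need strictness in one of two forms. The first is $C_{jk}>C_{j\pi_j^\ast}$, which holds when $\mathbb{E}[\mathbf{X}^{(j)}\mid\mathbf{X}^{(\pi_j^\ast)}]$ is not $\mathbf{X}^{(k)}$-measurable. The second applies when the true conditional mean is not of the affine-in-$\mathbf{X}^{(k)}$ column-stochastic form, so the model misspecification gap $\mathbb{E}D_{KL}>0$.

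The reversed edge $j\to k$ (true $k\to j$) is the critical case. Here the entropy terms alone need not distinguish directions, since mutual-information-type quantities are symmetric. I plan to argue that $\mathbb{E}[\mathbf{X}^{(k)}\mid\mathbf{X}^{(j)}]$ is generically not affine in $\mathbf{X}^{(j)}$ with stochastic coefficients. Alternatively, one can note the asymmetry: $\mathbb{E}[\mathbf{X}^{(j)}\mid\mathbf{X}^{(k)}]$ is an exact affine image of $\mathbf{X}^{(k)}$, while $\mathbf{X}^{(j)}$ carries additional noise. Jensen's inequality is then strict, so the reversed fit incurs a positive KL gap. This is where non-degeneracy is essential: with identical columns, $\mathbf{M}_{jk}\mathbf{X}^{(k)}$ is constant and both directions fit equally. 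I expect to need a mild genericity condition, namely nondeterministic $\mathbf{X}^{(j)}$ given $\mathbf{X}^{(k)}$. I would make that condition explicit if it is not implied by the model.

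Summing the per-node inequalities then shows that $T^\ast$ is the unique minimizer. Distinct trees therefore have distinct scores.
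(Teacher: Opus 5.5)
Your Step~1 decomposition is correct, but the argument breaks at Step~2(i). The identity $\mathbb{E}[\mathbf{X}^{(j)}\mid\mathbf{X}^{(k)}]=\mathbb{E}\bigl[\mathbb{E}[\mathbf{X}^{(j)}\mid\mathbf{X}^{(\pi_j^*)}]\mid\mathbf{X}^{(k)}\bigr]$ requires $\mathbf{X}^{(j)}\perp\mathbf{X}^{(k)}\mid\mathbf{X}^{(\pi_j^*)}$. The directed Markov property gives this only when $k$ is a non-descendant of $j$. It fails for descendants, and that is exactly the reversed-edge case. Suppose $T^*$ contains $k\to j$ with $k$ a root, and $T'$ reverses this edge. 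Then node $k$ moves from $C_{k\emptyset}$ to $C_{kj}\le C_{k\emptyset}$, so its term goes \emph{down}, while node $j$'s term goes up. The node-wise inequalities therefore cannot be summed, and per-node strictness in Step~3 cannot repair this.

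In your notation the correct comparison is joint. Set $\Delta_{jk}:=C_{j\emptyset}-C_{jk}=\mathbb{E}\,D_{KL}\bigl(\mathbb{E}[\mathbf{X}^{(j)}\mid\mathbf{X}^{(k)}]\,\|\,\mathbb{E}\mathbf{X}^{(j)}\bigr)$, and let $G_{kj}\ge 0$ be the misspecification gap of the reversed fit. Then $S(T')-S(T^*)=\Delta_{jk}-\Delta_{kj}+G_{kj}$. Showing $G_{kj}>0$, as you plan, is not enough, because $\Delta_{kj}$ can exceed $\Delta_{jk}$; these Jensen gaps are not symmetric. Nor does extra noise in $\mathbf{X}^{(j)}$ given $\mathbf{X}^{(k)}$ force $\mathbb{E}[\mathbf{X}^{(k)}\mid\mathbf{X}^{(j)}]$ out of the affine column-stochastic class.

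In fact this step cannot be closed under the proposition's hypotheses, even with your nondeterminism condition. Take $d_j=d_k=2$, $\mathbf{X}^{(k)}=(U,1-U)^\top$ and $\mathbf{X}^{(j)}=(V,1-V)^\top$, where $(U,V)$ has density $1+(1-2u)(1-2v)$ on $[0,1]^2$. This law is exchangeable with uniform marginals, $V$ is not a function of $U$, and $\mathbb{E}[V\mid U]=(1+U)/3$. Hence $\mathbb{E}[\mathbf{X}^{(j)}\mid\mathbf{X}^{(k)}]=\mathbf{M}\mathbf{X}^{(k)}$ with the non-degenerate $\mathbf{M}=\frac{1}{3}\begin{pmatrix}2&1\\1&2\end{pmatrix}$. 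By exchangeability the reversed conditional mean has the same form, so $G_{kj}=0$ and $\Delta_{kj}=\Delta_{jk}$. The trees $k\to j$ and $j\to k$ therefore have identical scores for every $\alpha$. A deterministic $\mathbf{X}^{(j)}=\mathbf{P}\mathbf{X}^{(k)}$ with $\mathbf{P}$ a permutation matrix gives another tie.

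The paper's argument splits distinct trees into those with different skeletons and Markov-equivalent ones. For the latter it asserts that the affine parent-driven mean ``cannot be replicated by reversing the edge.'' That is the same claim, and the example shows non-degeneracy alone does not deliver it. Your decomposition is useful because it exposes the missing hypothesis: $\Delta_{jk}+G_{kj}>\Delta_{kj}$, i.e.\ $\delta_{jk}>\delta_{kj}$, for each true edge $k\to j$. Assumption~\ref{ass:separation} builds this in: for a root $k$ of $T^*$ it forces $\delta_{kj}<\alpha^*<\delta_{jk}$. That is also what legitimizes the node-wise argument in Lemma~\ref{lem:pop_opt}.
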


We prove Proposition~\ref{prop:ident_weak} by using the property that Markov equivalent directed trees must share the same skeleton. While different orientations of a fixed skeleton are statistically indistinguishable in terms of conditional independence, the proposed score breaks this symmetry unless the parent effect degenerates (see Remark \ref{re:sharp}). In particular, for any non-root nodes, the loss distinguishes the true parent from alternative candidates because the parent-driven component induces a dependence that cannot be replicated by reversing the edge. The detailed argument is provided in the Web Appendix A.

\begin{remark}[Sharpness of the non-degeneracy condition]\label{re:sharp}
The non-degeneracy condition is tight. If $\mathbf{M}_{jk}$ has all identical columns $(\mathbf{M}_{jk} = \mathbf{m}\mathbf{1}^\top)$, then $\mathbf{M}_{jk} \mathbf{X}^{(k)} = \mathbf{m}$ for every $\mathbf{X}^{(k)} \in \mathcal{S}^{d_k-1}$, meaning the parent-driven component is constant. In this case, the conditional expectation of $\mathbf{X}^{(j)}$ can be absorbed into a single baseline $\widehat{\mathbf{X}}^{(j)} = \omega_0 \boldsymbol{\eta}_j + \omega_1 \mathbf{m}:=\widetilde{\boldsymbol{\eta}}_j \in \mathcal{S}^{d_j-1}$, and the edge $k \to j$ is observationally equivalent to $j$ being a root. Reversing the edge would give the same loss, and hence identifiability fails.

\end{remark}

\subsection{Estimation Theories}\label{sec:proof}

In this section, we establish theoretical guarantees for recovering the true directed tree. We first show that, under a suitable separation condition, the true tree uniquely minimizes the population criterion. Then, we prove that the empirical objective converges uniformly to this population target, yielding consistent recovery of the underlying structure.

\subsubsection{Consistency of Structure Recovery}

Let $T^* = (V, E^*) \in \mathcal{T}$ denote the true data-generating structure, and let $P^*$ denote the true data-generating distribution of $\mathbf{X}$, which is Markov with respect to $T^*$, leading to the following factorization:
\begin{align*}
    P^*(\mathbf{X})=\prod_{j\in V}P_{j\pi_j^*}^*(\mathbf{X}_j|\mathbf{X}_{\pi_j^*}),
\end{align*}
where $\pi_j^*$ denotes the true parent of $j$ in $T^*$ (empty if $j$ is a root) and $P_{j\pi_j^*}^*(\cdot|\cdot)$ is the conditional distribution under $P^*$. We leave $P_{j\pi_j^*}^*(\cdot|\cdot)$ unspecified except for its first moment: for each edge $(k,j) \in E^*$, we assume the true conditional expectation of $\mathbf{X}^{(j)}$ is given by \eqref{eq:condex} with true parameters $\Theta^*_{j\pi_j} = \{\omega_0^{(j)*}, \omega_1^{(j)*}, \boldsymbol{\eta}_j^*, \mathbf{M}_{j\pi_j}^*\}$ if $\pi_j^*\neq \emptyset$  and $\widehat{\mathbf{X}}_i^{(j)} = \boldsymbol{\eta}_j^*$ otherwise. To analyze structure recovery, we study the population-level risk function $\mathcal{R}_{jk}(\Theta_{jk}):=\mathcal{R}_{jk}(\mathbf{X}^{(j)}\mid\mathbf{X}^{(k)},\Theta_{jk})=\mathbb{E}[\mathcal{L}_{jk}(\mathbf{X}^{(j)}\mid\mathbf{X}^{(k)},\Theta_{jk})]$ as defined in Section \ref{sec:psf}.
Let $\mathcal{R}_{jk}^*$ denote the \emph{optimized risk},
\begin{equation*}
\mathcal{R}_{jk}^* = \inf_{\Theta_{jk} \in \Xi_{jk}} \mathcal{R}_{jk}(\Theta_{jk}),
\end{equation*}
where $\Xi_{jk}$ is the (compact) parameter space for $\Theta_{jk}$. When node $j$ is a root, the optimized risk reduces to
\begin{equation*}
\mathcal{R}_{j}^{\mathrm{root}} = \inf_{\boldsymbol{\eta}_j \in \mathcal{S}^{d_j-1}} \mathbb{E}\left[ D_{\KL}\!\left(\mathbf{X}^{(j)} \,\|\, \boldsymbol{\eta}_j\right) \right].
\end{equation*}

The corresponding \emph{sample versions} are
\begin{equation*}
\widehat{\mathcal{R}}_{jk}(\Theta_{jk}) = \frac{1}{n} \sum_{i=1}^n D_{\KL}\!\left(\mathbf{x}_i^{(j)} \,\|\, \widehat{\mathbf{x}}_i^{(j)}\right), \qquad \widehat{\mathcal{R}}_{jk}^* = \inf_{\Theta_{jk} \in \Xi_{jk}} \widehat{\mathcal{R}}_{jk}(\Theta_{jk}),
\end{equation*}
and $\widehat{\mathcal{R}}_j^{\mathrm{root}}$ is defined analogously. Optimizing the score involves the comparison of the risks for a node with or without a parent. This motivates the following notion of edge signal.

\begin{definition}[Edge Signal Strength]\label{def:signal}
The \emph{signal strength} of a candidate edge $k \to j$ is the population-level KL reduction from including the edge versus treating $j$ as a root:
\begin{equation*}
\delta_{jk} = \mathcal{R}_j^{\mathrm{root}} - \mathcal{R}_{jk}^*.
\end{equation*}
By construction, $\delta_{jk} \geq 0$, with equality if and only if the parent node $k$ provides no information for node $j$.
\end{definition}

For a true edge $(k,j)\in E^*$, the parent-driven
component introduces genuine predictive information, and hence the KL reduction
is strictly positive, $\delta_{jk}>0$. In contrast, for a non-edge $(k',j)\notin
E^*$, the signal $\delta_{jk'}$ reflects only indirect association.
It is zero when $k'$ and $j$ are marginally independent, and may be positive
when $k'$ is an ancestor or shares common ancestry with $j$. Such indirect
effects are strictly weaker under our model and are discussed in
Remark~\ref{rem:grandparent} below.

The population-level penalized score for a candidate tree $T = (V, E)$ is
\begin{equation*}
S(T;\alpha) = \sum_{j \in V} \mathcal{R}_j^{\mathrm{root}} - \sum_{(k,j) \in E} \delta_{jk} + \alpha \cdot |E|
= \underbrace{\sum_{j \in V} \mathcal{R}_j^{\mathrm{root}}}_{\text{constant}} + \sum_{(k,j) \in E} \left(\alpha - \delta_{jk}\right),
\end{equation*}
where we have used $\mathcal{R}_{jk}^* = \mathcal{R}_j^{\mathrm{root}} - \delta_{jk}$ for nodes with a parent and $\mathcal{R}_j^{\mathrm{root}}$ for root nodes. This representation makes the role of $\alpha$ clear: an edge $(k,j)$ reduces the score if and only if $\delta_{jk} > \alpha$.

The sample-level score is
\begin{equation}\label{eq:sample_score}
\widehat{S}_n(T;\alpha) = \sum_{(k,j) \in E} \widehat{\mathcal{R}}_{jk}^* + \sum_{j:\,\pi_j=\varnothing} \widehat{\mathcal{R}}_j^{\mathrm{root}} + \alpha \cdot |E|.
\end{equation}
The estimated tree is $\widehat{T}_n = \argmin_{T \in \mathcal{T}} \widehat{S}_n(T;\alpha)$. This decomposition shows that the global optimization separates into independent
edge contributions. An edge is selected precisely when its signal exceeds the
penalty level, making the procedure analogous to thresholding in sparse model
selection. To guarantee the consistency of structure recovery, we require the following assumptions. 

\begin{assumption}[Compactness and Boundedness]\label{ass:compact}
For any ordered pair $(k,j)$, the parameter space $\Xi_{jk}$ is compact. The observations satisfy $x_{i,r}^{(j)} \in [0,1]$ and $\sum_r x_{i,r}^{(j)} = 1$ for all $i, j, r$. There exists $\epsilon_0 > 0$ such that for all $\Theta_{jk} \in \Xi_{jk}$ and all $\mathbf{x}_i^{(k)} \in \mathcal{S}^{d_k-1}$, we have $\widehat{x}_{i,r}^{(j)} \geq \epsilon_0$ for all $r = 1, \ldots, d_j$.
\end{assumption}

The lower bound $\widehat{x}_r^{(j)} \geq \epsilon_0$ is ensured whenever the parameter space enforces $\omega_0^{(j)} \geq \omega_{\min} > 0$ and $\eta_{j,r} \geq \eta_{\min} > 0$ for all $r$, giving $\widehat{x}_{i,r}^{(j)} \geq \omega_{\min} \cdot \eta_{\min}$. This is a mild restriction that excludes degenerate baseline distributions.

\begin{assumption}[Signal Separation]\label{ass:separation}
There exists a threshold $\alpha^* > 0$ such that the true and false edge signals are separated:
\begin{equation*}
\delta_{\min}^+ := \min_{(k,j) \in E^*} \delta_{jk} > \alpha^* > \max_{(k',j):\, k' = \argmin_{k \neq \pi_j^*} \mathcal{R}_{jk}^* } \delta_{jk'} =: \delta_{\max}^-.
\end{equation*}
\end{assumption}

\begin{lemma}[Population optimality under signal separation]\label{lem:pop_opt}
Suppose Assumption~\ref{ass:separation} holds, i.e., there exists
$\alpha^* \in (\delta_{\max}^-,\delta_{\min}^+)$ such that
\[
\min_{(k,j)\in E^*}\delta_{jk}>\alpha^*>\max_{(k',j):\, k' = \argmin_{k \neq \pi_j^*} \mathcal{R}_{jk}^* } \delta_{jk'}.
\]
Then the true tree $T^*$ is the unique minimizer of the population penalized criterion:
\[
T^*=\arg\min_{T\in\mathcal T} S(T;\alpha^*).
\]
Equivalently, for any $T\neq T^*$, $S(T;\alpha^*)>S(T^*;\alpha^*)$.
\end{lemma}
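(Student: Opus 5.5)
Starting from the edge-additive form of the population score derived just above,
\[
S(T;\alpha^*) = \sum_{j\in V}\mathcal{R}_j^{\mathrm{root}} + \sum_{(k,j)\in E}(\alpha^*-\delta_{jk}),
\]
the constant term is common to all trees, so it suffices to compare $\Delta(T):=\sum_{(k,j)\in E}(\alpha^*-\delta_{jk})$ with $\Delta(T^*)$. A directed tree assigns each node at most one parent, so $\Delta(T)=\sum_{j\in V} c_j(\pi_j)$, where $c_j(\emptyset)=0$ and $c_j(k)=\alpha^*-\delta_{jk}$. I will compare $T$ and $T^*$ node by node, showing $c_j(\pi_j)\ge c_j(\pi_j^*)$ for every $j$, with strict inequality whenever $\pi_j\neq\pi_j^*$. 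Since $T\neq T^*$ forces $\pi_j\neq\pi_j^*$ for at least one $j$, summing over $j$ gives $S(T;\alpha^*)>S(T^*;\alpha^*)$.

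For each node $j$ there are two cases. \emph{Case 1: $j$ is a root in $T^*$.} Then $c_j(\pi_j^*)=0$. If $\pi_j=k\neq\emptyset$, I need $\delta_{jk}<\alpha^*$. This follows from Assumption~\ref{ass:separation} once I show that $\delta_{jk}\le\delta_{jk'}$, where $k'$ is the risk-minimizing non-true parent; here $\pi_j^*=\emptyset$, so every $k$ is a "non-true" parent. Because $\delta_{jk}=\mathcal{R}_j^{\mathrm{root}}-\mathcal{R}_{jk}^*$ is decreasing in $\mathcal{R}_{jk}^*$, the minimizer $k'=\argmin_{k\neq\pi_j^*}\mathcal{R}_{jk}^*$ maximizes $\delta_{jk}$ over all $k\neq\pi_j^*$. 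Hence $\delta_{jk}\le\delta_{jk'}\le\delta_{\max}^-<\alpha^*$.

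\emph{Case 2: $\pi_j^*=k^*\neq\emptyset$.} Then $c_j(\pi_j^*)=\alpha^*-\delta_{jk^*}<0$, since $\delta_{jk^*}\ge\delta_{\min}^+>\alpha^*$. If $\pi_j=\emptyset$, then $c_j(\pi_j)=0>c_j(\pi_j^*)$. If $\pi_j=k\neq k^*$, the same monotonicity argument gives $\delta_{jk}\le\delta_{\max}^-<\alpha^*<\delta_{jk^*}$, so $c_j(k)>0>c_j(k^*)$. Both subcases give strict inequality.

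The node-wise comparison ignores the acyclicity constraint that couples the parent choices. This causes no difficulty, because the node-wise minimizer $(\pi_j^*)_j$ is itself a valid tree, namely $T^*$. Minimizing the separable objective over all parent assignments, which is a superset of $\mathcal{T}$, therefore attains its unique minimum at $T^*\in\mathcal{T}$.

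The step I expect to need the most care is interpreting the maximum in Assumption~\ref{ass:separation}. The argument requires that $\delta_{\max}^-$ bounds $\delta_{jk}$ for \emph{every} non-true parent $k$ of every node $j$, including nodes that are roots in $T^*$. This bound comes from identifying the $\argmin$ of $\mathcal{R}_{jk}^*$ with the $\argmax$ of $\delta_{jk}$, which holds because $\mathcal{R}_j^{\mathrm{root}}$ does not depend on $k$. If there are ties in the $\argmin$, the monotonicity argument still gives the bound, since every tied $k$ has the same value of $\delta_{jk}$.
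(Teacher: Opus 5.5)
Your proof is correct and follows essentially the same argument as the paper: both decompose the score into per-node terms, use signal separation at $\alpha^*$ to show the true parent choice strictly minimizes each term, and sum over nodes to conclude uniqueness. Your explicit observation that $\argmin_{k\neq\pi_j^*}\mathcal{R}_{jk}^*$ maximizes $\delta_{jk}$ over non-true parents supplies a step the paper's proof leaves implicit. That step is why $\delta_{\max}^-$ bounds every false edge, including edges into roots of $T^*$.
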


\textit{Proof.}
Under Assumption~\ref{ass:separation} with $\alpha=\alpha^*$, every true edge satisfies $\delta_{jk}>\alpha^*$ and every false edge satisfies $\delta_{jk'}<\alpha^*$. Hence, for each node $j$, the parent choice in $T^*$ minimizes the node-wise contribution to $S(T;\alpha^*)$. Choosing a false parent (or incorrectly setting $j$ as a root) strictly increases the score, and dropping a true parent also increases the score. Since $S(T;\alpha^*)$ is a sum of these node-wise terms, any $T\neq T^*$ has strictly larger score. Therefore $T^*$ is the unique minimizer.
\hfill$\square$

Assumption~\ref{ass:separation} is thus the key population-level identifiability condition for score minimization: true edges are rewarded and false edges are penalized at $\alpha^*$. This is analogous to a ``beta-min'' condition in sparse model selection.

\begin{remark}[Indirect Associations and Ancestor Edges]\label{rem:grandparent}
A subtlety in Assumption~\ref{ass:separation} concerns indirect associations: if $k'$ is an ancestor of $j$ (e.g., the true structure contains a path $k' \to k \to j$), then $\mathbf{X}^{(k')}$ carries information about $\mathbf{X}^{(j)}$ through the intermediate node $k$, so the signal $\delta_{jk'}$ may be non-negligible. 

The true conditional expectations are
\begin{align*}
\mathbb{E}[\mathbf{X}^{(k)} | \mathbf{X}^{(k')}] &= \omega_0^{(k)} \boldsymbol{\eta}_k + \omega_1^{(k)} \mathbf{M}_{kk'}\, \mathbf{X}^{(k')}, \\
\mathbb{E}[\mathbf{X}^{(j)} | \mathbf{X}^{(k)}] &= \omega_0^{(j)} \boldsymbol{\eta}_j + \omega_1^{(j)} \mathbf{M}_{jk}\, \mathbf{X}^{(k)}. 
\end{align*}
By the Markov property, conditioning on $\mathbf{X}^{(k')}$ alone yields
\begin{equation*}
\mathbb{E}[\mathbf{X}^{(j)} | \mathbf{X}^{(k')}] = \omega_0^{(j)} \boldsymbol{\eta}_j + \omega_1^{(j)} \mathbf{M}_{jk}\left(\omega_0^{(k)} \boldsymbol{\eta}_k + \omega_1^{(k)} \mathbf{M}_{kk'}\, \mathbf{X}^{(k')}\right).
\end{equation*}
This shows that the grandparent edge $k' \to j$ can capture the conditional mean of $\mathbf{X}^{(j)}$ given $\mathbf{X}^{(k')}$, with an effective transition matrix $\widetilde{\mathbf{M}}_{jk'} = \mathbf{M}_{jk} \mathbf{M}_{kk'}$ and effective mixing weight $\widetilde{\omega}_1 = \omega_1^{(j)} \omega_1^{(k)}$. However, the KL loss depends not only on the conditional mean but also on the variability. By the data processing inequality for mutual information applied to the Markov chain,
\begin{equation}\label{eq:dpi}
I(\mathbf{X}^{(j)};\, \mathbf{X}^{(k)}) \geq I(\mathbf{X}^{(j)};\, \mathbf{X}^{(k')}),
\end{equation}
with equality if and only if $\mathbf{X}^{(k)}$ is a deterministic function of $\mathbf{X}^{(k')}$. Since the KL loss reduction $\delta_{jk}$ measures the predictive information that node $k$ provides about $j$ (relative to the baseline), inequality~\eqref{eq:dpi} implies $\delta_{jk} \geq \delta_{jk'}$
with strict inequality whenever $\mathbf{X}^{(k)}$ has genuine stochastic variation conditional on $\mathbf{X}^{(k')}$. 
\end{remark}

\begin{assumption}[IID Sampling]\label{ass:regularity}
The observations $(\mathbf{x}_1, \ldots, \mathbf{x}_n)$ are i.i.d.\ draws from the joint distribution $P^*$.
\end{assumption}



With these assumptions and the oracle regularization parameter, the tree structure can be consistently estimated. 
\begin{theorem}[Consistency of Structure Recovery]\label{thm:consistency}
Under Assumptions~\ref{ass:compact}--\ref{ass:regularity} with $\alpha=\alpha^*$, the estimated directed tree $\widehat{T}_n$ recovers the true structure $T^*$ with probability tending to one:
\begin{equation*}
\Pr\!\left(\widehat{T}_n = T^*\right) \to 1 \quad \text{as } n \to \infty.
\end{equation*}
\end{theorem}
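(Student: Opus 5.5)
The argument has two parts: a uniform law of large numbers for the node-wise risks, and a finite margin argument that lifts it to the whole tree. By Lemma~\ref{lem:pop_opt}, $T^*$ is the unique minimizer of $S(T;\alpha^*)$. Since $\mathcal{T}$ is finite (at most $p^p$ parent assignments), the population margin
\[
\Delta := \min_{T\neq T^*}\{S(T;\alpha^*)-S(T^*;\alpha^*)\}
\]
is strictly positive. Both $S$ and $\widehat S_n$ decompose into node-wise terms, each either a root risk or one of at most $p(p-1)$ optimized pair risks. It therefore suffices to show that
\[
\max_{j}\bigl|\widehat{\mathcal R}_j^{\mathrm{root}}-\mathcal R_j^{\mathrm{root}}\bigr| \quad\text{and}\quad \max_{j\neq k}\bigl|\widehat{\mathcal R}_{jk}^*-\mathcal R_{jk}^*\bigr|
\]
converge to zero in probability. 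On the event that all these deviations are below $\Delta/(4p)$, every tree's sample score lies within $\Delta/4$ of its population score. Hence $\widehat S_n(T;\alpha^*)-\widehat S_n(T^*;\alpha^*)\ge \Delta/2>0$ for all $T\ne T^*$, so $\widehat T_n=T^*$.

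For the uniform convergence, fix a pair $(j,k)$ and use
\[
|\widehat{\mathcal R}_{jk}^*-\mathcal R_{jk}^*|\le \sup_{\Theta\in\Xi_{jk}}|\widehat{\mathcal R}_{jk}(\Theta)-\mathcal R_{jk}(\Theta)|,
\]
which holds because the inf of two functions differs by at most their sup-norm distance. The root case is identical with $\Theta=\boldsymbol\eta_j$; its parameter set is restricted to a compact subset of the simplex with entries at least $\epsilon_0$, consistent with Assumption~\ref{ass:compact}. Write the loss as
\[
\sum_r x_r\log x_r-\sum_r x_r\log \widehat x_r(\Theta).
\]
The first term does not depend on $\Theta$ and is bounded in $[-\log d_j,0]$, so its average converges by the weak LLN under Assumption~\ref{ass:regularity}. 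For the second term, Assumption~\ref{ass:compact} gives $\widehat x_r\in[\epsilon_0,1]$, so it is bounded by $\log(1/\epsilon_0)$.

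The main step is showing the second term converges uniformly over $\Theta$. The map $\Theta\mapsto \widehat x_r(\Theta)=\omega_0\eta_r+\omega_1(\mathbf M\mathbf x^{(k)})_r$ is Lipschitz in $\Theta$, uniformly over $\mathbf x^{(k)}$ in the simplex, with a constant depending only on $d_k$. Since $\log$ is $(1/\epsilon_0)$-Lipschitz on $[\epsilon_0,1]$, the loss is Lipschitz in $\Theta$ with constant $L=O(1/\epsilon_0)$, uniformly in the data. Compactness of $\Xi_{jk}$ then gives a finite $\epsilon$-net. Pointwise, Hoeffding's inequality applies at each net point because the loss is bounded, and a union bound over the net combined with the Lipschitz bound extends this to the whole of $\Xi_{jk}$ (equivalently, invoke a Glivenko–Cantelli theorem for Lipschitz-parametrized classes on compacta). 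The result is
\[
\sup_{\Theta}|\widehat{\mathcal R}_{jk}(\Theta)-\mathcal R_{jk}(\Theta)|\to_p 0,
\]
and a union bound over the finitely many pairs completes the argument.

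The point requiring the most care is that $\widehat T_n$ is defined through the sample minimizers $\widehat{\mathcal R}^*_{jk}$ while Assumption~\ref{ass:separation} is stated through the population $\argmin$. Working directly with the finite-set margin $\Delta$ from Lemma~\ref{lem:pop_opt}, rather than with edgewise thresholds, avoids needing consistency of the sample argmin over parents. This also yields an explicit rate, $\Pr(\widehat T_n\neq T^*)\lesssim p^2 N(\epsilon)\exp(-cn\Delta^2/(p^2\log^2(1/\epsilon_0)))$, where $N(\epsilon)$ is the covering number of $\Xi_{jk}$ at scale $\epsilon\asymp\Delta\epsilon_0/p$.
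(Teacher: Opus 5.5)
Your proof is correct, and its first half is the paper's uniform law of large numbers: the bound $|\widehat{\mathcal R}^*_{jk}-\mathcal R^*_{jk}|\le\sup_{\Theta}|\widehat{\mathcal R}_{jk}(\Theta)-\mathcal R_{jk}(\Theta)|$, boundedness by $\log(1/\epsilon_0)$, Lipschitz dependence on $\Theta$, an $\epsilon$-net with Hoeffding, and a union bound over pairs. You differ in how you lift this to the tree. The paper passes to $\widehat\delta_{jk}=\widehat{\mathcal R}^{\mathrm{root}}_j-\widehat{\mathcal R}^*_{jk}$ and thresholds each one against $\alpha^*$, using the $p$-free tolerance $\gamma=\min(\delta^+_{\min}-\alpha^*,\alpha^*-\delta^-_{\max})$. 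You instead use the global margin $\Delta$ of Lemma~\ref{lem:pop_opt} and a triangle inequality over all $p$ node terms.

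Your route needs only that $T^*$ be the unique population minimizer over the finite set $\mathcal T$, so it would survive weakening Assumption~\ref{ass:separation} to that. The paper's route is quantitatively sharper. Your per-term tolerance $\Delta/(4p)$ puts $p^2$ into the exponent, whereas Corollary~\ref{cor:finite} has exponent $n\gamma^2/(32(\log(1/\epsilon_0))^2)$ and only $\log p$ in the sample complexity. You can recover this within your framework: in $\widehat S_n(T;\alpha^*)-\widehat S_n(T^*;\alpha^*)$ only nodes whose parent differs contribute, each with population gap at least $\gamma$, so per-risk accuracy $\gamma/2$ suffices.

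The difficulty you cite as motivation is not a real obstacle for the edgewise route. Since $\mathcal R^*_{jk}=\mathcal R^{\mathrm{root}}_j-\delta_{jk}$, the $\argmin$ over $k\neq\pi_j^*$ in Assumption~\ref{ass:separation} is the argmax of $\delta_{jk}$. The assumption therefore bounds every false-edge signal by $\delta^-_{\max}$, and uniform convergence over all pairs handles them simultaneously; no argmin consistency is needed.

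Finally, restricting $\boldsymbol\eta_j$ to entries at least $\epsilon_0$ in the root case quietly changes $\mathcal R^{\mathrm{root}}_j$, which the paper defines as an infimum over the whole simplex. The restriction is unnecessary. The root problem has the closed-form minimizer $\bar{\mathbf x}^{(j)}$, so $\widehat{\mathcal R}^{\mathrm{root}}_j=n^{-1}\sum_i\sum_r x^{(j)}_{i,r}\log x^{(j)}_{i,r}-\sum_r \bar x^{(j)}_r\log\bar x^{(j)}_r$. This converges to $\mathcal R^{\mathrm{root}}_j$ by the law of large numbers and continuity of $t\log t$ on $[0,1]$.
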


The proof proceeds in two steps. First, we establish uniform convergence of the optimized sample risks to their population counterparts: under Assumptions \ref{ass:compact} and \ref{ass:regularity}, for every ordered pair $(j,k)$, the optimized sample risk $\widehat{\mathcal{R}}^*_{jk}$ converges in probability to $\mathcal{R}^*_{jk}$, and this convergence holds uniformly over all pairs. The argument combines the compactness of the parameter space $\Xi_{jk}$ (Assumption \ref{ass:compact}) with the boundedness of the KL loss, which is controlled by $\log(1/\epsilon_0)$, to verify a uniform law of large numbers via an $\epsilon$-net argument over $\Xi_{jk}$. The same reasoning applies to $\widehat{\mathcal{R}}^{\text{root}}_j$. Second, the uniform convergence of the risks implies uniform convergence of the estimated edge signals $\widehat{\delta}_{jk} = \widehat{\mathcal{R}}^{\text{root}}_j - \widehat{\mathcal{R}}^*_{jk}$ to their population values $\delta_{jk}$. Combined with Assumption \ref{ass:separation} (signal separation) and $\alpha=\alpha^*$, this ensures that for sufficiently large $n$, every true edge satisfies $\hat{\delta}_{jk} > \alpha$ and every false edge satisfies $\hat{\delta}_{jk'} < \alpha$, so the minimizer of the sample-level penalized score $\widehat{S}_n(T;\alpha)$ coincides with $T^*$. The complete proof is provided in the Web Appendix B.

Note that our asymptotic theory assumes knowledge of the oracle regularization parameter $\alpha=\alpha^*$. In practice, we select it via cross-validation. Establishing consistency under selected $\alpha$ is beyond the scope of this paper.

\subsubsection{Finite-Sample Guarantee}

We further extend the asymptotic result established above into quantitative bounds that characterize the performance of the estimator at finite sample sizes in the following Corollary~\ref{cor:finite}.

\begin{corollary}[Finite-Sample Recovery Guarantee]\label{cor:finite}
Under Assumptions~\ref{ass:compact}--\ref{ass:regularity} with $\alpha = \alpha^*$, 
\begin{equation}\label{eq:final_bound}
\Pr\!\left(\widehat{T}_n \neq T^*\right) \leq 4(p^2+p) \left(\frac{24RL}{\gamma}\right)^{D_{\max}} \exp\!\left(-\frac{n\gamma^2}{32(\log(1/\epsilon_0))^2}\right),
\end{equation}
where $\gamma = \min(\delta_{\min}^+ - \alpha^*,\; \alpha^* - \delta_{\max}^-)$ is the signal gap, $L = d_{\max}\sqrt{D_{\max}}/\epsilon_0$, $d_{\max}=\max_jd_j$, $D_{\max} = \max_{j,k} (1 + d_j + d_jd_k)$, and $R$ is the maximum diameter of the parameter space $\Xi_{jk}$ for all $j,k$.
\end{corollary}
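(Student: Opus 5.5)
The plan is to make the two-step argument sketched after Theorem~\ref{thm:consistency} quantitative. The first step is a reduction to uniform deviation events. By Lemma~\ref{lem:pop_opt} and the decomposition of $\widehat{S}_n$ into node-wise terms, $\widehat{T}_n = T^*$ holds whenever every estimated signal is on the correct side of $\alpha^*$, that is, $\widehat{\delta}_{jk} > \alpha^*$ for true edges and $\widehat{\delta}_{jk'} < \alpha^*$ for the competing false parents. Since $\widehat{\delta}_{jk} - \delta_{jk} = (\widehat{\mathcal{R}}_j^{\mathrm{root}} - \mathcal{R}_j^{\mathrm{root}}) - (\widehat{\mathcal{R}}^*_{jk} - \mathcal{R}^*_{jk})$, it is enough that every optimized risk (the $p^2$ ordered pairs plus the $p$ roots) is within $\gamma/2$ of its population value. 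Using $|\inf f - \inf g| \le \sup|f-g|$, a union bound gives
\[
\Pr(\widehat{T}_n \neq T^*) \le \sum_{\text{pairs and roots}} \Pr\Big(\sup_{\Theta\in\Xi_{jk}} |\widehat{\mathcal{R}}_{jk}(\Theta) - \mathcal{R}_{jk}(\Theta)| \ge \gamma/2\Big).
\]
This accounts for the factor $(p^2+p)$. The remaining factor $4$ will come from the two-sided Hoeffding bound combined with a factor of $2$ in the net step.

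The second step is a pointwise concentration bound. Under Assumption~\ref{ass:compact}, each summand $D_{\KL}(\mathbf{x}_i^{(j)}\|\widehat{\mathbf{x}}_i^{(j)}) = \sum_r x_r\log x_r - \sum_r x_r \log \widehat{x}_r$ lies in an interval of length at most $\log(1/\epsilon_0)$. The reasons are that $-\sum_r x_r\log\widehat x_r \in [0,\log(1/\epsilon_0)]$ and that the entropy term does not depend on $\Theta$, so it cancels between $\widehat{\mathcal{R}}_{jk}$ and $\widehat{\mathcal{R}}^{\mathrm{root}}_j$. The cleaner route is therefore to work with the cross-entropy part only, which gives range $\log(1/\epsilon_0)$. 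Under Assumption~\ref{ass:regularity}, Hoeffding's inequality at a fixed $\Theta$ and deviation level $\gamma/4$ gives $2\exp(-2n(\gamma/4)^2/(\log(1/\epsilon_0))^2) = 2\exp(-n\gamma^2/(8(\log 1/\epsilon_0)^2))$. This is stronger than the stated exponent with constant $32$. To absorb the slack I will take deviation $\gamma/8$ at the net points, which gives exactly $32$.

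The third step is the Lipschitz and net argument, which is the main obstacle. The parameter $\Theta_{jk} = (\omega_0,\omega_1,\boldsymbol\eta_j,\mathbf{M}_{jk})$ lives in dimension at most $1+d_j+d_jd_k \le D_{\max}$ (counting $\omega_0$ as determined by $\omega_1$). I need a Lipschitz bound for $\Theta \mapsto -\sum_r x_r\log\widehat x_r(\Theta)$ in the Euclidean norm. Each partial derivative is $-\sum_r x_r \,\partial\widehat x_r/\partial\theta / \widehat x_r$. Since $|\partial \widehat x_r/\partial\theta|\le 1$ entrywise (the coordinates of $\boldsymbol\eta$, $\mathbf{X}^{(k)}$, $\mathbf{M}$ and the $\omega$'s are all in $[0,1]$) and $\widehat x_r\ge\epsilon_0$, each partial derivative is bounded by $d_{\max}/\epsilon_0$. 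The gradient norm is then at most $d_{\max}\sqrt{D_{\max}}/\epsilon_0 = L$. The same $L$ bounds both the empirical and the population risks, so their difference is $2L$-Lipschitz. I take an $\varepsilon$-net of $\Xi_{jk}$ with $\varepsilon = \gamma/(8L)\cdot(1/\text{const})$. The standard volumetric bound for a set of diameter $R$ in $D_{\max}$ dimensions gives net size $(3R/\varepsilon)^{D_{\max}}$. With $\varepsilon = \gamma/(8L)$ this is $(24RL/\gamma)^{D_{\max}}$. On the net I require deviation $\le\gamma/8$, and off the net the Lipschitz error is $2L\varepsilon = \gamma/4$. The total is $3\gamma/8 < \gamma/2$, which suffices.

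Combining the three steps, each pair contributes at most $2\cdot(24RL/\gamma)^{D_{\max}}\cdot 2\exp(-n\gamma^2/(32(\log 1/\epsilon_0)^2))$, and multiplying by $p^2+p$ gives \eqref{eq:final_bound}. The delicate points are the choice of constants so that the net radius and the per-point deviation match the displayed $24$ and $32$. There is also a small subtlety that the infimum over $\Xi_{jk}$ is attained by compactness, so the reduction $|\inf-\inf|\le\sup|\cdot|$ applies. The entropy term must be handled consistently: either cancel it in $\widehat\delta$, or note that it enters $\widehat{\mathcal{R}}_{jk}$ and $\widehat{\mathcal{R}}_j^{\mathrm{root}}$ identically.
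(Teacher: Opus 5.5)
Your proposal is correct and follows essentially the same route as the paper's argument. You reduce $\{\widehat T_n\neq T^*\}$ to one of the $p^2+p$ optimized risks deviating by at least $\gamma/2$, so that every $\widehat\delta_{jk}$ lands on the correct side of $\alpha^*$. You then control each risk uniformly over $\Xi_{jk}$ using an $\varepsilon$-net of size $(24RL/\gamma)^{D_{\max}}$ with $\varepsilon=\gamma/(8L)$, the Lipschitz constant $L=d_{\max}\sqrt{D_{\max}}/\epsilon_0$, and Hoeffding's inequality for a loss with range $\log(1/\epsilon_0)$.

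Two small caveats. First, your own bookkeeping gives the constant $2(p^2+p)$, so the extra factor $2$ you attribute to the net step is unexplained but harmless slack, since it only weakens the bound. Second, treating the $p$ root risks the same way requires restricting $\boldsymbol\eta_j$ to a compact set with the same $\epsilon_0$ floor, because over the whole simplex the loss is unbounded; the paper makes this same assumption implicitly when it says the root case is handled analogously.
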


Corollary~\ref{cor:finite} quantifies how large the sample size is needed to reliably distinguish true and false edges. The bound shows that recovery becomes exponentially more likely as $n$ increases, with the required sample size scaling as $\gamma^{-2}$, where
$\gamma$ is the signal gap between correct and incorrect parents. This dependence mirrors classical results in high-dimensional model selection: stronger separation between competing models leads to faster structure recovery, while higher-dimensional nodes increase the complexity through $D_{\max}$.

\begin{remark}[Sample Complexity]
For cleaner exposition, the bound \eqref{eq:final_bound} can be written as
\begin{equation*}
\Pr\!\left(\widehat{T}_n \neq T^*\right) \leq \exp\!\left(-\frac{n\gamma^2}{32(\log(1/\epsilon_0))^2} + D_{\max}\log\!\left(\frac{24RL}{\gamma}\right) + \log(4(p^2+p))\right).
\end{equation*}
This is less than $\delta$ when
\begin{equation*}
n \geq \frac{32(\log(1/\epsilon_0))^2}{\gamma^2}\left(D_{\max}\log\!\left(\frac{24RL}{\gamma}\right) + \log\!\left(\frac{4(p^2+p)}{\delta}\right)\right).
\end{equation*}
The sample complexity is thus $O\!\left(\gamma^{-2} \log^2(1/\epsilon_0)\left(D_{\max}\log(D_{\max}/\gamma) + \log p\right)\right)$, where the parameter dimension $D_{\max} \approx d_{\max}^2$ now appears explicitly through the covering number. The dominant term for large node dimensions is $D_{\max} \log(D_{\max}/\gamma)$, reflecting the cost of uniformizing over the parameter space.
\end{remark}


\section{Structure Learning Algorithm}\label{sec:algorithm}

Optimizing the proposed score function involves an inner minimization over the parameters for each tree and an outer minimization over the space of trees. The additivity of the tree loss~(\ref{eq:stloss}) across nodes/edges decouples these two stages: the parameters $\Theta_{jk}$ for each candidate edge $k \to j$ can be estimated independently, after which the globally optimal tree is recovered via a single call to the Chu--Liu/Edmonds algorithm. We describe each stage below.

\subsection{Parameter Estimation via EM}\label{sec:em}

For a fixed candidate edge $k \to j$, we must estimate the parameters $\Theta_{jk} = \{\omega^{(j)}_{0}, \omega^{(j)}_{1}, \boldsymbol{\eta}_j, \mathbf{M}_{jk}\}$ by minimizing the sample-level KL loss $\widehat{\mathcal{R}}_{jk}(\Theta_{jk})$. This minimization admits a natural interpretation as maximum likelihood estimation for a two-component mixture model. Specifically, note that
\[
\sum_{i=1}^n \sum_{r=1}^{d_j} x_{i,r}^{(j)} \log \widehat{x}_{i,r}^{(j)}
\]
is the log quasi-likelihood of a multinomial distribution in which the observed compositions $\mathbf{x}_i^{(j)}$ play the role of empirical frequency vectors and the predicted compositions $\widehat{\mathbf{x}}_i^{(j)}$ play the role of cell probabilities. Since the predicted composition~(\ref{eq:condex}) is a two-component mixture of a baseline $\boldsymbol{\eta}_j$ and a parent-driven component $\mathbf{M}_{jk}\mathbf{x}_i^{(k)}$, the optimization problem has the same structure as fitting a mixture model via maximum likelihood. This connection, analogous to the coarsened likelihood perspective of \cite{miller2019robust}, justifies the use of the Expectation--Maximization (EM) algorithm \citep{fiksel2022generalized}.

In the E-step, we compute, for each sample $i$ and part $r$, the responsibility of the baseline component, $\gamma_{i,r,0} = \omega_0^{(j)} \eta_{j,r} / \widehat{x}_{i,r}^{(j)}$, and the responsibility of the parent-driven component, $\gamma_{i,r,1} = \omega_1^{(j)} (\mathbf{M}_{jk}\mathbf{x}_i^{(k)})_r / \widehat{x}_{i,r}^{(j)}$. Within the parent-driven component, we further allocate across the parent's parts via $\pi_{i,r,j,c} = x_{i,c}^{(k)} M_{jk,cr} / (\mathbf{M}_{jk}\mathbf{x}_i^{(k)})_r$. In the M-step, all parameters admit closed-form updates: the mixing weights, the baseline composition $\boldsymbol{\eta}_j$, and the transition matrix $\mathbf{M}_{jk}$ are each updated as weighted averages dictated by the responsibilities. The complete procedure is summarized in Algorithm~\ref{alg:em}.


\begin{algorithm}
\caption{EM Algorithm for Pairwise Parameter Estimation}\label{alg:em}
\begin{algorithmic}[1]
\Require Child data $\mathbf{X}^{(j)}$, Parent data $\mathbf{X}^{(k)}$, Tolerance $\epsilon$
\Ensure Estimated parameters $\widehat{\Theta}_{jk} = \{\widehat{\omega}^{(j)}_{0}, \widehat{\omega}^{(j)}_{1}, \widehat{\boldsymbol{\eta}}_j, \widehat{\mathbf{M}}_{jk}\}$
\State Initialize parameters $\Theta^{(0)}$.
\Repeat
    \State \textbf{E-Step:} Compute latent responsibilities for each sample $i$ and part $r$:
    \State $\gamma_{i,r,0} \gets \frac{\omega^{(j)}_{0} \eta_{j,r}}{\widehat{x}_{i,r}^{(j)}}$ \Comment{Baseline responsibility}
    \State $\gamma_{i,r,1} \gets \frac{\omega^{(j)}_{1} ( \mathbf{M}_{jk}\mathbf{x}_i^{(k)})_r}{\widehat{x}_{i,r}^{(j)}}$ \Comment{Parent responsibility}
    \State $\pi_{i,r,j,c} \gets \frac{x_{i,c}^{(k)} M_{jk,cr}}{(\mathbf{M}_{jk}\mathbf{x}_i^{(k)} )_r}$ \Comment{Part-level allocation}
    
    \State \textbf{M-Step:} Update parameters using closed-form solutions:
    \State $\omega^{(j)}_{0} \gets \frac{\sum_{i=1}^n \sum_{r=1}^{d_j} x_{i,r}^{(j)} \gamma_{i,r,0}}{\sum_{i=1}^n \sum_{r=1}^{d_j} x_{i,r}^{(j)}}, \quad \omega^{(j)}_{1} \gets 1 - \omega^{(j)}_{0}$
    \State $M_{jk,cr} \gets \frac{\sum_{i=1}^n x_{i,r}^{(j)} \gamma_{i,r,1} \pi_{i,r,j,c}}{\sum_{r'=1}^{d_j} \sum_{i=1}^n x_{i,r'}^{(j)} \gamma_{i,r',1} \pi_{i,r',j,c}}$
    \State $\eta_{j,r} \gets \frac{\sum_{i=1}^n x_{i,r}^{(j)} \gamma_{i,r,0}}{\sum_{i=1}^n \sum_{r'=1}^{d_j} x_{i,r'}^{(j)} \gamma_{i,r',0}}$
\Until{convergence}
\end{algorithmic}
\end{algorithm}

Since the $p(p-1)$ pairwise estimation problems are mutually independent, they can be executed in parallel. For each root candidate $j$, the marginal loss $\widehat{\mathcal{R}}_j^{\text{root}}$ is computed by simply setting $\widehat{\mathbf{x}}_i^{(j)} = \boldsymbol{\eta}_j$ and optimizing over $\boldsymbol{\eta}_j$ alone, which reduces to $\widehat{\boldsymbol{\eta}}_j = n^{-1}\sum_{i=1}^n \mathbf{x}_i^{(j)}$.

\subsection{Tree Search via Chu--Liu/Edmonds}\label{sec:treesearch}

After the pairwise estimation step, we obtain a $p \times p$ matrix of optimized edge losses $\widehat{\mathcal{R}}^*_{jk}$ for all ordered pairs $(j,k)$, as well as the root losses $\widehat{\mathcal{R}}^{\text{root}}_j$ for each node. For a given penalty $\alpha$, the sample-level penalized score~(\ref{eq:sample_score}) decomposes as a sum of independent per-node contributions. This decomposition maps the tree search exactly onto a minimum-weight spanning arborescence problem, which is solved in $O(p^2)$ time by the Chu--Liu/Edmonds algorithm \citep{chow1968approximating, edmonds1967optimum}.

A standard arborescence has a single root. To accommodate directed trees with multiple roots (i.e., forests), we augment the node set with a virtual root node~$0$ and add directed edges $0 \to j$ for every $j \in V$, with weight $\widehat{\mathcal{R}}^{\text{root}}_j + \alpha$. All other edge weights are set to $\widehat{\mathcal{R}}^*_{jk} + \alpha$ for $k \to j$. Running the Chu--Liu/Edmonds algorithm on this augmented graph yields an optimal arborescence rooted at node~$0$. Each edge $0 \to j$ selected in this arborescence indicates that node~$j$ is a root of the inferred tree/forest (i.e., it has no parent among the original nodes), while each edge $k \to j$ with $k \neq 0$ indicates that $k$ is the parent of $j$. Removing the virtual root and its incident edges recovers the optimal tree on $V$.

\subsection{Regularization Parameter Selection via Cross-Validation}\label{sec:cv}

The regularization parameter $\alpha$ controls the trade-off between fit and complexity: a smaller $\alpha$ favors denser trees with more edges. We select $\alpha$ by $K$-fold cross-validation. For each fold, the pairwise parameters and edge risks are estimated on the training set, the optimal tree is constructed for each candidate $\alpha$ via the Chu--Liu/Edmonds algorithm, and the predictive risk is evaluated on the held-out validation set. The predictive risk on the validation set for a given tree $\widehat{T}$ is $\sum_{j=1}^{p} \widehat{\mathcal{R}}_{j\pi_j}^{\text{val}}$, where $\widehat{\mathcal{R}}_{j\pi_j}^{\text{val}}$ is the risk of node $j$ given its parent $\pi_j$ in $\widehat{T}$ (or the root risk if $\pi_j = \varnothing$), evaluated using the training-set parameter estimates applied to the validation data. The value of $\alpha$ that minimizes the average validation risk across folds is selected, and the final tree is estimated using this $\widehat{\alpha}$ on the full dataset. The complete procedure is summarized in Algorithm~\ref{alg:cv}.


\begin{algorithm}
\caption{Structure Learning with Cross-Validation}\label{alg:cv}
\begin{algorithmic}[1]
\Require Data $\mathcal{D}$, Set of candidate penalties $\mathcal{A}$, Number of folds $K$
\Ensure Optimal tree $\widehat{T}$ and parameters $\widehat{\Theta}$
\State Partition $\mathcal{D}$ into $K$ disjoint folds $\{F_1, \dots, F_K\}$
\For{$t = 1$ to $K$}
    \State $\mathcal{D}_{\text{train}} \gets \mathcal{D} \setminus F_t$, \quad $\mathcal{D}_{\text{val}} \gets F_t$
    \State Estimate parameters and risks $\widehat{\mathcal{R}}^*_{jk}$, $\widehat{\mathcal{R}}^{\text{root}}_j$ on $\mathcal{D}_{\text{train}}$ using Algorithm~\ref{alg:em}
    \For{$\alpha \in \mathcal{A}$}
        \State Construct augmented graph with edge weights as described in \S\ref{sec:treesearch}
        \State $\widehat{T}_{\alpha, t} \gets \text{ChuLiuEdmonds}(V \cup \{0\},\, W)$
        \State Evaluate validation risk on $\mathcal{D}_{\text{val}}$ given $\widehat{T}_{\alpha, t}$ and training-set parameters
        \State Store validation risk in $S_{\alpha, t}$
    \EndFor
\EndFor
\State $\widehat{\alpha} \gets \arg\min_{\alpha \in \mathcal{A}} \frac{1}{K} \sum_{t=1}^K S_{\alpha, t}$
\State Re-estimate all risks on full dataset $\mathcal{D}$
\State $\widehat{T} \gets \text{ChuLiuEdmonds}(V \cup \{0\},\, W^*)$ where $W^*$ is based on the re-estimated risks and $\alpha = \widehat{\alpha}$.
\end{algorithmic}
\end{algorithm}

\section{Simulation Studies}\label{sec:sim}

We evaluate the proposed method in two synthetic settings that mirror the dimensionality and structural heterogeneity in our real-data applications. Simulation 1 uses $p=5$ compositional nodes and compares the proposed method with PC-stable and LiNGAM under repeated sampling. Simulation 2 increases graph size and complexity ($p=15$) and evaluates performance across chain, multi-root, and diverse branching structures, again benchmarking against PC-stable and LiNGAM.

Across both settings, the proposed method consistently attains the strongest overall recovery quality, with markedly lower false discovery rates while maintaining competitive true positive rates. These gains are most pronounced in higher-dimensional regimes. 
Full simulation details and complete results are reported in the Web Appendix C.

\section{Real Data Applications}\label{sec:realdata}

\subsection{Multi-Site Microbiome Data}\label{sec:realdata1}

We apply our method to the main motivating Multi-Omic Microbiome Study-Pregnancy Initiative (MOMS-PI) dataset \citep{fettweis2019vaginal}, accessed via the \texttt{HMP2Data} R package. The MOMS-PI study is a large-scale investigation of the maternal microbiome during pregnancy, with the primary objective of identifying microbial signatures associated with preterm birth and characterizing inter-site relationships of microbial communities within the maternal body. The study collected microbial samples from five body sites: Buccal mucosa, Rectum, Vagina, Feces, and Cervix of uterus.

We focus on the 96 subjects who provided complete samples at all five body sites at the baseline. For each subject, microbial profiles were obtained using rRNA-seq with operational taxonomic unit (OTU) reads summarized at the genus level. The raw data are extremely sparse, with an overall zero rate of 98.7\% across all OTUs. Following the common filtering protocol, we retain only OTUs meeting a prevalence threshold of $\ge 10\%$ and a mean abundance threshold of $\ge 5$. After filtering, the site-specific composition dimensions are: Buccal ($d=45$), Rectum ($d=55$), Vagina ($d=19$), Feces ($d=29$), and Cervix ($d=26$). As shown in Table~\ref{tab:zerorates}, the filtered compositions remain substantially zero-inflated, with zero rates ranging from 39.2\% (Vagina) to 67.4\% (Feces), underscoring the importance of a methodology that handles exact zeros gracefully.

\begin{table}[htbp]
\centering
\caption{Zero Rate Summary across 96 Subjects for the MOMS-PI dataset.}
\label{tab:zerorates}
\small
\begin{tabular}{@{}lcc@{}}
\toprule
\textbf{Body Site} & \textbf{Zero Rate Range} & \textbf{Overall Zero Rate}\\
\midrule
Buccal mucosa & 0 - 89.6\% & 46.1\% \\
Rectum & 0\% - 87.5\% & 47.4\% \\
Vagina & 0\% - 75.0\% & 39.2\% \\
Feces & 2.1\% - 88.5\% & 67.4\% \\
Cervix of uterus & 0\% - 88.5\% & 52.8\% \\
\bottomrule
\end{tabular}
\end{table}

Each body site yields a compositional vector for each subject, giving $p=5$ compositional nodes. We apply our method with leave-one-out cross-validation to select $\alpha$. The learned tree structure, shown in Figure~\ref{fig:momspi_structure}, identifies three disconnected components corresponding to distinct physiological systems: a digestive tract pathway (Rectum $\to$ Feces), a reproductive tract pathway (Vagina $\to$ Cervix), and the oral cavity (Buccal mucosa) as an isolated root.

\begin{figure}[htbp]
    \centering
    \begin{tikzpicture}
        \node at (-5.5, 1.8) {\textbf{Defecation Pathway}};
        \node at (-5.5, 0) {
            \begin{forest}
            for tree={
                draw, rounded corners, font=\small, line width=1pt,
                edge={->, >=stealth, line width=1.2pt},
                parent anchor=south, child anchor=north,
                l sep=10mm, s sep=10mm, inner sep=5pt
            }
            [Rectum [Feces]]
            \end{forest}
        };

        \node at (0, 1.8) {\textbf{Reproductive Pathway}};
        \node at (0, 0) {
            \begin{forest}
            for tree={
                draw, rounded corners, font=\small,line width=1pt,
                edge={->, >=stealth, line width=1.2pt},
                parent anchor=south, child anchor=north,
                l sep=10mm, s sep=10mm, inner sep=5pt
            }
            [Vagina [Cervix of uterus]]
            \end{forest}
        };

        \node at (5.5, 1.8) {\textbf{Isolated}};
        \node[draw, rounded corners, font=\small, line width=1pt, inner sep=7pt] (buccal) at (5.5, -0.3) {Buccal mucosa};
    \end{tikzpicture}
    \caption{The learned tree structure for the MOMS-PI microbiome data. The model identifies three independent components corresponding to distinct physiological systems: the digestive tract (Rectum $\to$ Feces), the reproductive tract (Vagina $\to$ Cervix), and the oral cavity (Buccal mucosa).}
    \label{fig:momspi_structure}
\end{figure}

This structure aligns with well-established findings in microbial ecology. The directional link from Vagina to Cervix is supported by evidence of a microbiota continuum along the lower reproductive tract, with the cervical microbiome exhibiting greater similarity to the vaginal microbiome than to the uterine microbiome \citep{chen2017microbiota}.
The Rectum $\to$ Feces edge reflects the ecological continuity of the gastrointestinal tract, a pattern consistently reported in large-scale gut microbiome surveys \citep{qin2010human}. The isolation of Buccal mucosa is consistent with the strong site specificity of microbial communities documented by the Human Microbiome Project \citep{human2012structure}, which showed that oral microbial communities are compositionally distinct from both gastrointestinal and cervicovaginal communities. 

\begin{figure}[H]
\centering
\begin{tabular}{cc}
  \includegraphics[width=0.75\textwidth]{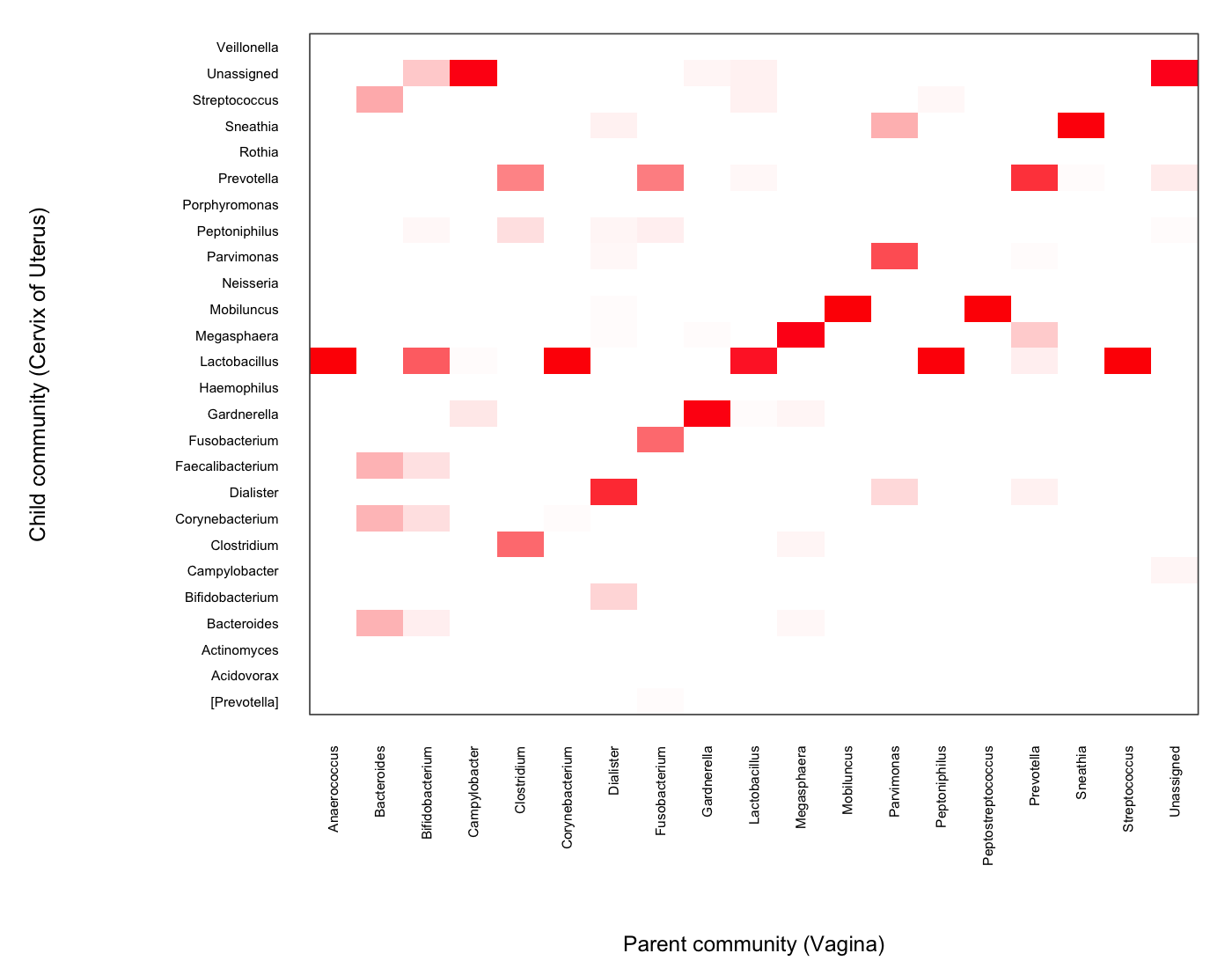} \\
  (a) Vagina \(\rightarrow\) cervix \\
  \includegraphics[width=0.75\textwidth]{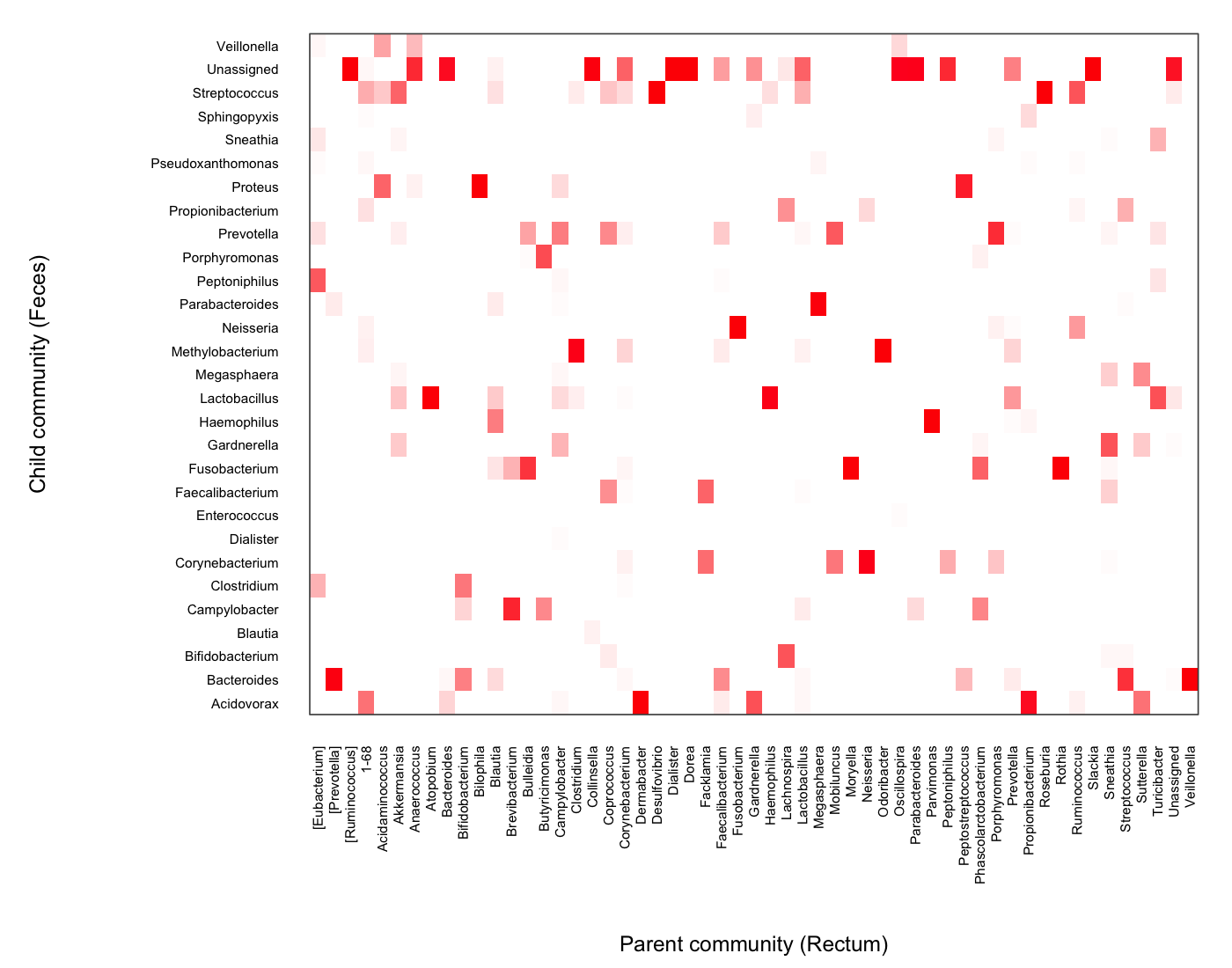} \\
  (b) Rectum \(\rightarrow\) feces
\end{tabular}
\caption{Estimated transition matrices $\mathbf{M}_{jk}$ for the two selected cross-site microbiome links: (a) vagina to cervix and (b) rectum to feces. The matrices display the transition weights between bacterial genera in the parent (x-axis) and child (y-axis) communities. Each column of $\mathbf{M}_{jk}$ sums to 1, with color intensity (white to red) indicating increasing weight.}
\label{fig:real-data-M}
\end{figure}

In Figure~\ref{fig:real-data-M}, we show the estimated transition matrices $\mathbf{M}_{jk}$ for the two selected edges. For the vagina to cervix link, $\widehat{\mathbf{M}}_{jk}$ exhibits high-weight same-genus mappings (e.g., \emph{Sneathia}\(\rightarrow\)\emph{Sneathia}, \emph{Gardnerella}\(\rightarrow\)\emph{Gardnerella}, \emph{Megasphaera}\(\rightarrow\)\emph{Megasphaera}, and \emph{Lactobacillus}\(\rightarrow\)\emph{Lactobacillus}), even though the model does not use any explicit genus-name information that encourages the  matching across sites. This pattern indicates that the model recovers coherent cross-site structure directly from the observed data. For the rectum to feces link, a high-weight cross-genus mapping is \emph{Veillonella}\(\rightarrow\)\emph{Bacteroides}. Both genera are common gut-associated taxa, and rectal and fecal microbial profiles are known to share substantial signal in comparative sampling studies \citep{rode2024fecal}. Overall, the recovered tree structure provides biological validation for the directional dependencies inferred by our model.

\subsection{Population-Scale Single-Cell Data}\label{sec:realdata2}

In the second application, we analyze the population-scale OneK1K single-cell cohort \citep{yazar2022single} and construct compositional nodes for immune cell subtypes via a preprocessing pipeline (dimension reduction, clustering, and GMM-based simplex representations). The resulting learned tree recovers biologically meaningful immune-development patterns, including links consistent with B-cell maturation and CD4$^+$ memory differentiation. The full data-processing pipeline and detailed results are provided in Web Appendix D.

\section{Discussion}\label{sec:con}

We have proposed a method for learning directed tree structures over zero-inflated compositional nodes, which is a setting not addressed by existing graphical models. The key ingredients are a KL-divergence-based score function that respects the simplex geometry and handles zero-inflated compositions, and a mixture-based conditional expectation model parameterized by a column-stochastic transition matrix. The former provides a principled measure of compositional discrepancy, whereas the latter ensures that predicted compositions remain on the simplex while enabling directional identifiability under a mild non-degeneracy condition. We established consistency of structure recovery and derived finite-sample guarantees, and demonstrated the method on both multi-site microbiome data, where the learned tree structure recovers known physiological pathways, and population-scale single-cell data, where the learned tree captures some established immune cell differentiation hierarchies.

Several directions merit further investigation. First, extending the framework to general DAGs would broaden applicability. This extension would require addressing both the computational challenge of searching over the super-exponential DAG space and the identifiability challenge of distinguishing among graphs within a Markov equivalence class (the asymmetry of the KL-based score may continue to provide leverage for orientation, as it does in the tree case). Second, incorporating uncertainty quantification for the learned edges, for example, via bootstrapping or a Bayesian formulation, would allow practitioners to assess confidence in individual edges rather than reporting only a point estimate of the tree. 




%
\bibliographystyle{apalike}
\bibliography{ref.bib}












\label{lastpage}

\end{document}